\newtheorem{theorem}{Theorem}
\newtheorem{definition}{Definition}
\newcommand{\nn}{\nonumber}
\def\a{\alpha}
\def\d{\delta}
\def\eps{\epsilon}
\def\th{\theta}
\def\m{\mu}
\def\n{\nu}
\def\r{\rho}
\def\s{\sigma}
\def\p{\psi}
\def\w{\omega}
\def\P{\Psi}
\def\<{\langle}
\def\>{\rangle}
\def\Tr{{\rm Tr}}
\def\mN{{\mathcal{N}}}
\def\mH{{\mathcal{H}}}
\def\bI{{\mathbb{I}}}
\def\hP{{\hat{\Psi}}} 
\def\hPh{{\hat{\Phi}}}
\def\he{{\hat{e}}}
\def\ha{{\hat{a}}}
\begin{document}
	
	\title{Reduced density matrix of nonlocal identical particles}
	
	\author[1]{Seungbeom Chin \thanks{sbthesy@gmail.com}}
\author[2,3]{Joonsuk Huh \thanks{joonsukhuh@gmail.com}}
\affil[1]{Department of Electrical and Computer Engineering, Sungkyunkwan University, Suwon 16419, Korea}

\affil[2]{Department of Chemistry, Sungkyunkwan University, Suwon 16419, Korea }

\affil[3]{ SKKU Advanced Institute of Nanotechnology (SAINT), Sungkyunkwan University, Suwon 16419, Korea}
\maketitle

	\begin{abstract}
	We probe the theoretical connection among three different approaches to analyze the entanglement of identical particles,  i.e., the first quantization language (1QL), elementary-symmetric/exterior products (which has the mathematical equivalence to no-labeling approaches), and the algebraic approach based on the GNS construction. Among several methods to quantify the entanglement of identical particles, we focus on the computation of reduced density matrices, which can be achieved by the concept of \emph{symmetrized partial trace} defined in 1QL. We show that the symmetrized partial trace corresponds to the interior product in symmetric and exterior algebra (SEA), which also corresponds to the subalgebra restriction in the algebraic approach based on GNS representation. Our research bridges different viewpoints for understanding the quantum correlation of identical particles in a consistent manner.
\end{abstract}

\section{Introduction}

 Quantum entanglement is one of the crucial quantum concepts which reveals the essential feature of quantum physics. It implies the possibility of composite systems that cannot be described as a simple collection of individual subsystems,  even when the subsystems locate far from each other \cite{einstein1935can, bell2004speakable}. It is also exploited as a crucial resource to enable several tasks with quantum speedup \cite{horodecki2009quantum}.
	One of the insightful approaches to analyze the entanglement of a given quantum system is to use the partial trace technic. Intuitively, if a multipartite quantum system is entangled, a subsystem of the quantum system has some nonlocal (measurement-dependent) correlation with the ``outer world'' in the total system. The information on the correlation is encoded in the \emph{reduced density matrix}, a quantum state acquired by partial tracing the outer part of the total state.
	
	For the case of non-identical particles, in which each particle resides in a distinguished Hilbert space, the concept of the partial trace is well-defined.
	On the other hand, in the case of identical particles (both bosons and fermions), individual particles do not reside in independent Hilbert spaces, by which the concept of partial trace seems inappropriate to obtain reduced density matrices suitable for analyzing the entanglement of identical particles. Several alternatives have been suggested to overcome this problem.

	An algebraic approach (AA) based on Gel'fand-Naimark-Segel (GNS) construction is suggested by Balachandran et al.~\cite{balachandran2013entanglement,balachandran2013algebraic}.
	They demonstrated that, instead of partial trace, a restriction to a chosen subalgebra of observables provides a sound tool to compute the entanglement entropy (the importance of subalgebras for interpreting the entanglement of identical particles in the second quantization language (2QL) is also pointed out in Ref.~\cite{benatti2011entanglement,benatti2012bipartite,benatti2017remarks}). On the other hand, a no-labelling approach (NLA) \cite{franco2016quantum, franco2018indistinguishability,compagno2018dealing} describes the identical particles without introducing particle pseudo-labels. By extracting the transition relations of wave functions from the symmetric properties of identical particles, partial trace can be defined in the no-labeling formalism (the same process was reproduced in the second quantization language in Ref.~\cite{lourencco2019entanglement}). Also, the concept of \emph{symmetrized partial trace} for identical particles in the first quantization language (1QL) was introduced in Ref.~\cite{chin2019entanglement}. By applying the symmetrization principle for identical particles not only to wavefunctions but also to detectors, the reduced density matrix that preserves the particle label symmetry can be derived (for more discussions on the entanglement of identical particles from various viewpoints, see Refs. \cite{ schliemann2001j,schliemann2001ja, eckert2002k, paskauskas2001r,paunkovic2004role, zanardi2002p,shi2003shi,barnum2004subsystem, barnum2005generalization, levay2005elementary, plastino2009ar, killoran2014extracting,tichy2013entanglement,dalton2017quantum,dalton2017quantum2, cavalcanti2007useful}).
	
	Hence, one can state that there exists a seemingly incompatible standpoint on the validity of the partial trace in the problem of entanglement of identical particles. However, considering that all the above theories deal with the same physical systems, one can surmise that this seeming inconsistency must be reconciled by examining the resultant quantities or the formal structures.
	In this work, we show how the aforementioned different approaches can translate and provide fundamentally equivalent results to each other.
	
	In the course of delving into the issue, we utilize the elementary symmetric product (for bosons) and the exterior product (for fermions) to demonstrate the indistinguishability and projection rules for identical particles. It will be shown that \emph{the symmetrized partial trace of identical particles is equivalent to the interior products between elementary symmetric (for bosons) and  exterior (for fermions) state vectors, which is equivalent to the restriction to a subalgebra on GNS representation.}
	
	 Section~\ref{1ql} explains the first quantization language (1QL) and the concept of symmetrized partial trace. Section~\ref{sea} shows how 1QL can be translated to the symmetric and exterior algebra (SEA). Then Section~\ref{algebraic} shows that reduced density matrix computed in SEA is the same matrix computed using the concept of restriction to subalgebras in the algebraic approach introduced in Refs.~\cite{balachandran2013entanglement,balachandran2013algebraic}.  
	%Our discussion in the main text will focus on the fermionic case that corresponds to the exterior product. The definition of elementary symmetric products and its application to the bosonic entanglement case will be discussed in Supplemental Materials. 
	
	%=============================================================
	%=============================================================
	
\section{The first quantization language}\label{1ql}

%The discussion in the section is the extension of the Sec.~II of Ref.~\cite{chin2019entanglement} so that it can be applied to both bosons and fermions.

To describe the identical particles in 1QL, we start with the description of $N$ non-identical particles. For this case, since we can ``distinguish'' the particles fully, each particle has inherent physical labels that are different from each other. A particle with label A in a state $\P$ is described by $|\P\>_A$,  where
$\P$ includes the position $\p$ and internal state $s$, i.e., $\P = (\p,s)$. The total $N$-particle state is given by
\begin{align}\label{nnonid}
    |\P_1\>_{1}\otimes|\P_2\>_{2}\otimes\cdots \otimes |\P_N\>_{N}, 
\end{align} where the subscripts outside the kets denote the particle labels.
Since the transition amplitude of $|\P\>_A$ to another state $|\Phi\>_B$ is given by
\begin{align}\label{trans}
_A\<\Phi|\P\>_B = \<\Phi|\P\> \d_{AB},
\end{align} 
the transition amplitude from a $N$-particle state $ |\P_1\>_{1}\otimes|\P_2\>_{2}\otimes\cdots \otimes|\P_N\>_{N}$ to $ |\Phi_1\>_{1}\otimes|\Phi_2\>_{2}\otimes\cdots \otimes|\Phi_N\>_{N}$ is computed as
\begin{align}\label{nnonidtrans}
     \<\Phi_1|\P_1\>\<\Phi_2|\P_2\>\cdots \<\Phi_N|\P_N\>.
\end{align}
Or, one can omit the labels, by  supposing that the order of particle states represents the labels. Then Eq.~\eqref{nnonid} is rewritten as
\begin{align}\label{nnonidordered}
|\P_1\>\otimes|\P_2\>\otimes\cdots \otimes|\P_N\> 
\end{align}
and the transition amplitude is given in the same form as Eq.~\eqref{nnonidtrans}. We call the  expressions Eq.~\eqref{nnonid} and Eq.~\eqref{nnonidordered}  \emph{explicit/implicit notation} respectively.  However, one should be aware that a ket $|\P_i\>$ in the implicit notation is different from the same ket in Eq.~\eqref{nnonidtrans}. The kets in Eq.~\eqref{nnonidordered} implicitly contain the information on particle labels by their relative order, which is absent in the kets in Eq.~\eqref{nnonidtrans}. Hence, we use the explicit notation to avoid the confusion in this text. Appendix \ref{implicit} explains the implicit description of identical particles and compares it with the explicit description that we will discuss from now on.

Since identical particles have the exchange symmetry, 
the total state of $N$ identical particles in states $\P_i$ ($i=1,\dots,N$) with particle labels $A_a$ ($a=1,\dots, N$ and $A_a\neq A_b$ for $a\neq b$) is expressed as  
%\begin{widetext}
\begin{align}\label{totalstate}
|\P\>^\pm \equiv 
 |\P_1,\P_2, \cdots , \P_N\>^{\pm} = \mN(\P)\Big[ \sum_{\s\in S_N}(\pm 1)^\s|\P_{1}\>_{A_{\s(1)}}|\P_{2}\>_{A_{\s(2)}}\cdots |\P_{N}\>_{A_{\s(N)}}\Big],
\end{align}
%\end{widetext}
where the sign $+$ is for bosons and $-$ for fermions. $\mN(\P)$ is the normalization factor of $ |\P_1,\P_2, \cdots , \P_N\>$. Since identical particles cannot be addressed individually, the particle labels are not physical, i.e., ``pseudo-labels''~\cite{chin2019entanglement}.
%$\mN _N(\n)$ is the normalization factor that depends on the number of particles that are in the same state. More specifically, when 
%\begin{align}
%(\P_1,\P_2,\cdots, \P_N) = (\underbrace{x_1,\cdots, x_1}_{\n_1}, \underbrace{x_2,\cdots, x_2}_{\n_2},\cdots,\underbrace{x_l,\cdots, x_l}_{\n_l} ),
%\end{align}
%$\mN_N(\n)$ is written as
%\begin{align}
% \mN_N(\n)= \sqrt{\frac{\prod_{j=1}^{l}\n_j! }{ N!} }.
%\end{align} 
With Eq.~\eqref{trans}, the transition amplitude from $|\P_1,\cdots , \P_N\>^\pm$ to $|\Phi_1,\cdots , \Phi_N\>^\pm$ is computed as 
\begin{align}\label{ta}
\< \Phi_1,\cdots , \Phi_N|\P_1,\cdots , \P_N\>^\pm = \mN(\Phi)\mN(\P) \sum_{\r,\s}(\pm 1)^{\r}(\pm 1)^\s \prod_{i=1}^{N}\<\Phi_{\r(i)}|\P_{\s(i)}\>.
\end{align}
By the definition of a matrix $A$ with imposed as $A_{ij} =\<\Phi_i|\P_j\> $, Eq. \eqref{ta} is proportional to the matrix permanent and determinant of $A$ for bosons and fermions repectively.

%\begin{align}\label{perm}
%  \< \Phi_1,\cdots , \Phi_N|\P_1,\cdots , \P_N\> =\frac{\mN(\Phi)\mN(\P) }{N!} \perm(A),
%\end{align} where 
%`perm' represents the matrix permanent (for the case of fermions, Eq. \eqref{totalstate} experiences sign changes in the summation along the degree of permutation, hence the transition amplitude is proportional to the matrix determinant of $A$).
%The same result appears in the second quantization language \cite{scheel2004permanents}. For the case of fermion, the transition amplitude has the form of matrix determinent.

\begin{figure}[t]
	\centering
	\includegraphics[width=8cm]{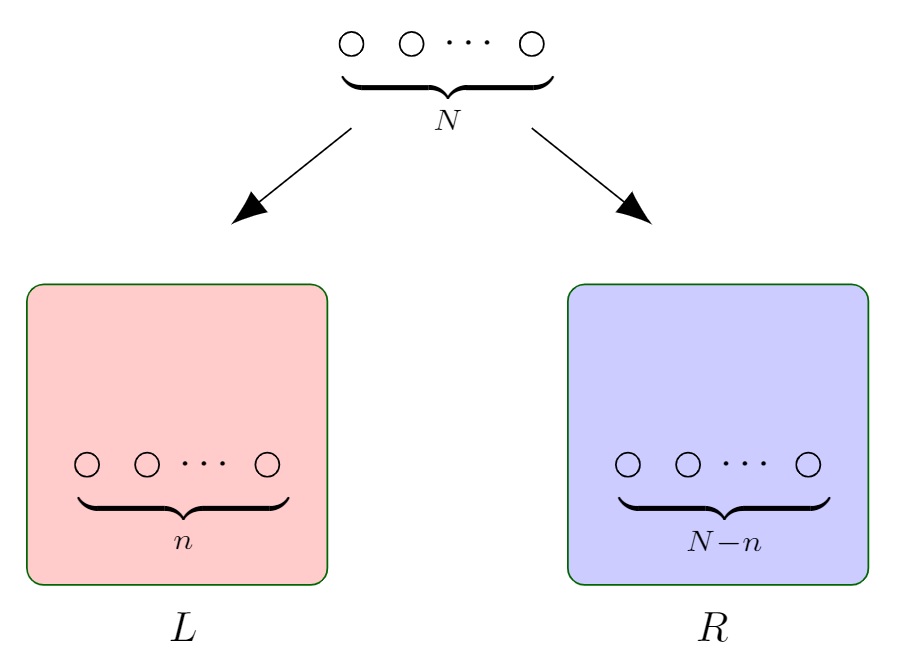}
	\caption{$N$ identical particles distributed in two distinctive subsystems $L$ and $R$. The two sets of identical particles ($n,N-n$) can have nontrivial correlation with nonzero spatial coherence~\cite{franco2018indistinguishability,chin2019entanglement}. } 
\label{split}
\end{figure}

Our bipartite entanglement of identical particles presupposes the division of the particles into two distinguished spatial modes (Fig.~\ref{split}). 
The expression of $n$ ($\le N$) identical particle partial wavefunctions in 1QL needs extra attention, for in this case we have no information on which of the $N$ particles are in which mode.
The partial wavefunction $|\P_1,\cdots,\P_n\>$ for the subsystem should be symmetrized with respect to the $N$ particle pseudolabels. The most general form that satisfies the exchange symmetry is given by
\begin{align}\label{partsym}
|\P_1,\cdots,\P_n\>^\pm &= \mathcal{N}_{[n]}(\P)\sum_{\substack{a_1< \cdots < a_n \\ =1}}^N e^{i\th_{a_1\cdots a_n}}\sum_{\s\in S_n}(\pm 1)^\s|\P_1\>_{A_{a_1}} \cdots |\P_{n}\>_{A_{a_n}}, 
\end{align} where $1\le a_p \le N$ for all $p=1,\cdots, n$ and $\mN_{[n]}$ is the normalization factor. Note that $\th_{a_1\cdots a_n}$ for each $\{a_1,\cdots, a_n\}$ does not affect the anti-symmetric property of the subsystem state, hence can be chosen arbitrarily~\footnote{This phase ambiguity implies a kind of gauge symmetry of 1QL to 2QL}.
All the phases $\th_{a_1\cdots a_n}$ can be set to zero for the case of bosons~\cite{chin2019entanglement}, which is however not true for fermions (see Appendix~\ref{31ferm} for a more detailed explanation).

With Eq.~\eqref{partsym}, we can define the partial trace of a given density marix $\r$ over a subsystem S with $n$ identical particles. %The partial trace of $\r$ must be traced with the complete orthogonal basis set of the form \eqref{partsym}. More explicitely,  
Since the resultant reduced density matrix also preserves the pseudolabel exchange symmetry, the partial trace for identical particles in 1QL is dubbed \emph{symmetrized partial trace}~\cite{chin2019entanglement}.
Supposing $\{|\Phi_{1}^a, \cdots,\Phi_{n}^a\>\}_a$ composes the complete symmetric computational basis set of $n$ particles in the subsystem $S$, 
the identity matrix for $S$ is expressed with 
\begin{align}\label{IS}
\mathbb{I}_S = \sum_{n \in SSR}\sum_{a}|\Phi_{1}^a, \cdots,\Phi_{n}^a\> \<\Phi_{1}^a,\cdots, \Phi_{n}^a|
\end{align}
where the summation over $n$ means that we have to add the states for all possible $n$ that preserve the superselection rules (SSR, the particle number and parity SSR for bosons and fermions  respectively~\cite{wick1997intrinsic,wiseman2003entanglement,chin2020taming}).   
%so that it satisfies $\mathbb{I}_S|\P_1,\cdots,\P_N\> =|\P_1,\cdots,\P_N\>$,
Then the reduced density matrix of the subsystem $\bar{S}$  by the symmetrized partial trace over $S$ is given by
\begin{align}\label{partial}
 \r_{\bar{S}} =	\Tr_S(\r) \equiv \sum_{n \in SSR}\sum_a \<\Phi_{1}^a, \dots,\Phi_{n}^a| \r |\Phi_{1}^a,\dots, \Phi_{n}^a\>.
\end{align}
With the reduced density matrix $\r_{\bar{S}}$, we can evaluate the amount of entanglement that is detectable and physical.
Appendix \ref{31ferm} provides an $(N,n)=(3,1)$ fermionic example. One can see that the extension of the argument to a mixed state case $\r= \sum_{a}\p_a|\P_a\>\<\P_a|$ ($\p_a \in \mathbb{R}^+$, $\sum_a \p_a = 1$) is straightforward .

\section{From 1QL to symmetric-exterior algebraic (SEA) approach }\label{sea}

The computations in Sec.~\ref{1ql} can be reproduced using the tensor algebra methods with suitable symmetries for identical particles. In this section, we provide a mathematical analysis on the relation of 1QL with the symmetric-exterior algebra (SEA).
We show that a bosonic wavefunction in 1QL corresponds to an elementary symmetric vector in the symmetric algebra, which can be straightforwardly extended to the relation between fermionic wavefunctions in 1QL and exterior vectors.

%It is defined in Ref.~\cite{chin2020taming} without providing rigorous mathematical connection to 1QL. Here we present how SEA is derived from 1QL.

%Suppose a particle has state $\P_i = (\p_i,s_i)$ where $\p_i$ is the spatial wavefunction and $s_i$ contains all the possible internal degrees of freedom. Then the wavefunction of $N$ identical particles is expressed along the following definition.  
%$ $ \\

%\subsection{Elementary symmetric algebra and bosons}

%One can state that elementary symmetric products (which is different from symmetric products) are the symmetric version of exterior products.
From a given element $x$ of a tensor algebra, one can impose an exchange symmetry with a proper mapping. Let  $\{\he_{1},\cdots,\he_{N}\}$ be a basis set of the $N$-dimensional vector space $V$ and $\{\he^{*1},\cdots,\he^{*N}\}$ its dual, so that the inner product among them is given by $(\he_{a}, \he^{*b}) = \d_a^b$.
Then we define the elementary symmetrizing map\footnote{It is named ``elementary'' symmetrizing map because it projects out all the tensor products of the same basis.} $S^{(k)}_e(x)$ of the element $x$ as follows: 
	\begin{definition}\label{elemsymm}
	For a $k$-th tensor power $T^kV$ of a vector space $V$ ($=\overbrace{V\otimes \cdots \otimes V}^{k} $) and a $k$-vector $x=\sum_{i_1,\cdots, i_k} x^{i_1\cdots i_k}\he_{i_1}\otimes \cdots \otimes \he_{i_k}$ in $T^kV$, the elementary symmetrizing map $S^{(k)}_e(x)$ is defined as
		\begin{align}
		S_e^{(k)}(x) = \sum_{i_1,\cdots, i_k} x^{i_1\cdots i_k} S_e^{(k)}(\he_{i_1}\otimes \cdots \otimes \he_{i_k}) 	\end{align}
		where
		\begin{align}
		S_e^{(k)}(\he_{i_1}\otimes \cdots \otimes \he_{i_k}) = \frac{1}{k!}\sum_{a_1, \cdots, a_k=1}^k|\eps_{i_{a_1}\cdots i_{a_k} }| (\he_{i_{a_1}}\otimes \cdots \otimes \he_{i_{a_k}})
		\end{align}	($|\eps_{i_{a_1},\cdots, i_{a_k}}|$ is the absolute value of the Levi-Civita symbol, which vanishes when $i_{a_k}=i_{a_l}$ for any $k$ and $l$).
	\end{definition}
	Note that the elementary symmetrizing map $S_e^{(k)}$ projects out any elements of $x$ that satisfy $i_a=i_b$ ($1\le a,b \le k$). 
	For example, when $x=\sum_{i,j=1}^N x^{ij}(\he_i\otimes \he_j)$, we have
	\begin{align}
	S_e^{(2)}(x) &= \sum_{i,j}x^{ij} S_e^{(2)}(\he_i\otimes \he_j)=\frac{1}{2} \sum_{i,j}x^{ij}|\eps_{ij}|(\he_i\otimes \he_j+\he_j\otimes \he_i) \nn \\ 
	&=\sum_{i\neq  j } \Big(\frac{x^{ij}+x^{ji} }{2}\Big)\Big(\frac{\he_i\otimes \he_j+\he_j\otimes \he_i}{2}\Big).
	\end{align} 
	The elementary symmetrizing map is simply denoted with the the elementary symmetric product $\vee$ as
	\begin{align}
	v\vee w \equiv S_e^{(k)}(v\otimes w)
	\end{align}
	where $v\in T^lV$ and $w \in T^{(k-l)}V$ ($0\le l\le k$).
	
Definition~\ref{elemsymm} gives a direct relation of a $N$ boson state in 1QL with the symmetric algebra. 
By defining $|\hP_i\>=\frac{1}{\sqrt{N}}\sum_{a=1}^N|\P_i\>_{A_a}\he^{*a}$, which is a vector in $V$ that is invariant under the exchange of pseudolabels, we can directly see that an $N$-boson total state $|\P\>^+$ (Eq.~\eqref{totalstate}) is related to an elementary symmetric $N$-vector as  (note that the equalities hold upto normalization)
	\begin{align}\label{bosontotal}
	|\P\>^+ 
	&= \sum_{\s\in S_N}|\P_1\>_{A_{\s(1)}} |\P_2\>_{A_{\s(2)}}\cdots |\P_N\>_{A_{\s(N)}} \nn \\
	&=\sum_{\substack{k_1<\cdots<k_N\\=1}}^N (\he_{k_1} \vee \cdots \vee \he_{k_N},|\hP_1\>\vee\cdots\vee |\hP_N\> ) \equiv F^+(|\hP_1\>\vee\cdots\vee |\hP_N\> ).
	\end{align} Here $(v,w)$ is the inner product between two $k$-vectors $v$ and $w$.
	And the $n$-boson subsystem state Eq.~\eqref{partsym} is written as  
	\begin{align}\label{bosonsubwf}
	&|\P_1,\cdots ,\P_n\>^+ \nn \\
	&=\sum_{\substack{a_1<\cdots< a_n\\=1}}^N e^{i\th_{a_1\cdots a_n}} \sum_{\s\in S_n} |\P_1\>_{A_{\s(a_1)}}\cdots |\P_n\>_{A_{\s(a_n)}} \nn \\
	&=\sum_{k_1<\cdots<k_n=1}^N e^{i\th_{k_1\cdots k_n}} (\he_{A_{k_1}}\vee \cdots \vee \he_{A_{k_n}}, |\hP_1\>\vee\cdots\vee |\hP_n\>).
	\end{align} 
	%Then we have 
	%\begin{align}
	%&\<\hPh|(|\hP_1\>\vee\cdots\vee |\hP_N\>) \nn \\
	%& = \sum_{j}\<\Phi|\P_j\>|\P_1\>\vee\cdots\vee (|\P_j\>)\vee |\P_N\>
	%\end{align}
	%and 
	%\begin{align}
	%&\<\hP|\P_1,\cdots,\P_N\> \nn \\
	%& = \sum_{j}\<\Phi|\P_j\>|\P_1,\cdots,(\P_j),\cdots,\P_N\>
	%\end{align} 
Note that the phase ambiguity of Eq. \eqref{partsym} can be restricted to the complex coefficients of the $n$-dimensional basis vectors $ \he_{A_{k_1}}\vee \cdots \vee \he_{A_{k_n}}$ in Eq.~\eqref{bosonsubwf}. Considering the phase ambiguity is a purely mathematical feature that has no physical implication, one can state that  $|\hP_1\>\vee\cdots\vee |\hP_n\>$ ($1\le n \le N$) have the complete physical information on the total and sub- states for a set of identical bosons.  Hence, \emph{any subset of a given $N$ identical bosons can be exactly represented as an elementary symmetric product of single particle vectors $|\hP_i\>$ that is invariant under pseudolable exchanges.} From now on, we call such vectors ``elementary symmetric state vectors.''

	The connection of bosonic states to the exterior state vectors provides an intriguing insight on our physical system. 
	Since an elementary symmetric state vector $|\hP_{i_1}\>\vee \cdots \vee |\hP_{i_n}\>$ ($1 \le i_a\le N$, $^\forall a=1,\cdots , n$) is in the $n$th symmetric power $\bigvee^{n}(V)$ ($1\le n \le N$), all the possible subsets of identical fermions construct the complete set of the graded structure, i.e., $\{ |\hP_{i_1}\> \vee  \cdots\vee |\hP_{i_n} \>\}_{n=1}^N$ $\simeq$ $\oplus_{n=1}^N \bigvee^{n}(V)$. We can see that the projection between identical particle states corresponds to the shift in the graded structure. 
	Considering the bra states $\<\P_n,\cdots ,\P_1|$ ($\equiv (|\P_1,\cdots ,\P_n\>)^\dagger$) are expressed likewise with the symmetric products of $\<\hP_i| = \sum_{a}\<\P_i|^{A_a}e_{A_a}$, one can see that the projection of $|\Phi_1,\cdots ,\Phi_m\>$ onto $|\P_1,\cdots ,\P_n\>$ ($1\le n\le m\le N$) is replaced with the \emph{interior products} of $\<\hPh_n|\vee\cdots\vee \<\hPh_1|$ and $|\hP_1\>\vee\cdots\vee |\hP_m\>$.
	%\begin{widetext}
	%		\begin{align}\label{inner product}
	%		\< \Phi_1,\cdots ,\Phi_n|\Phi_1,\cdots ,\Phi_m\>=  \Big( \sum_{k_1<\cdots<k_{(m-n)}=1}^N e^{i\th_{k_1\cdots k_{(m-n)}}} \he_{A_{k_1}}\wedge \he_{A_{k_2}}\wedge \cdots \wedge \he_{A_{k_n}}, (\<\hPh_1|\wedge\cdots\wedge \<\hPh_n|, |\hP_1\>\wedge\cdots\wedge |\hP_m\>) \Big),
	%		\end{align}
	%\end{widetext}
	%where $\big(x,(y,z)\big)$ for vectors $x,y$ and $z$ denotes the interior product of $x$ and $(y,z)$.
	For example, when $n=1$,
	the interior product of $\<\hPh_1|$ and $|\hP_1\>\vee\cdots\vee |\hP_m\>$ is given by
	\begin{align}\label{bsn=1}
	&\<\hPh_1|\cdot |\hP_1\>\vee\cdots\vee |\hP_m\> \nn \\
	& = \sum_{j}\<\Phi_1|\P_j\>|\hP_1\>\vee\cdots\vee (|\hP_j\>)\vee |\hP_m\>
	\end{align}  (here $(|\hP_j\>)$ means that $|\hP_j\>$ is absent in the exterior state vector). For $n=2$, we have
	\begin{align}\label{bsn=2}
	&\<\hPh_1|\vee \<\hPh_2|\cdot |\hP_1\>\vee\cdots\vee |\hP_m\>\nn \\
	&= \<\hPh_1|\cdot (\<\hPh_2|\cdot |\hP_1\>\vee\cdots\vee |\hP_m\>) \nn \\
	&= \sum_{j}(-1)^{j-1}\<\Phi_2|\P_j\> (\<\hPh_1|\cdot |\hP_1\>\vee\cdots\vee (|\hP_j\>)\vee |\hP_m\>) \nn \\
	&= \sum_{j,k}(-1)^{j+k}\<\Phi_1|\P_j\>\<\Phi_2|\P_k\> \nn \\
	&\qquad  \quad \times
	\Big[|\hP_1\>\vee\cdots\vee (|\hP_j\>)\vee \cdots \vee(|\hP_j\>)\vee \cdots \vee |\hP_m\>\Big].
	\end{align} The same operation can be applied to an arbitrary $n$.

\begin{table}[t] \label{1st_to_sea}			
	\begin{center} 	
		\begin{tabular}{|  l | l |}
			\hline
			a set of identical particles & tensor algebra \\ \hline\hline 
			bosonic state & elementary symmetric  vector\\
			(fermionic state) & (exterior vector) \\ \hline
			symmetric partial trace & interior product  \\ \hline
			subsystems & graded structure 
			\\ \hline
		\end{tabular}\caption{The translation relation between 1QL and SEA for the description of identical particle entanglement}
	\end{center}

\end{table}
	
By replacing the state vectors of Eqs.~\eqref{IS} and \eqref{partial} with the corresponding elementary symmetric state vectors, we can calculate the reduced density matrix in a given subsystem with Eqs.~\eqref{bsn=1} and \eqref{bsn=2}. The algebraic relation of states defined in  NLA~\cite{compagno2018dealing} is equal to the interior product Eq.~\eqref{bsn=1}, which means that the physical states in NLA are operationally equivalent to the elementary symmetric vectors.
	
We can analyze the entanglement of fermions in the same way by replacing the elementary symmetric algebra with the exterior (antisymmetric) algebra~\cite{lam2015topics}.
The relation of the physical systems of identical particles with tensor algebras is summarized in Table 1. Appendix \ref{implicit} presents another way of describing identical particles using SEA with the implicit notation.

We can state that the tensor product $\otimes$ for the entanglement of non-identical particles is replaced with the elementary symmetric product $\vee$ and the exterior product $\wedge$ for bosons and fermions respectively. On the other hand, the entanglement of identical particles is a detector dependent quantity, which is determined by the spatial relation of particle wavefunctions to orthogonal detectors (which can be interpreted as coherence~\cite{chin2019entanglement}). Hence, the separability of a given state is not completely determined by the mathematical structure of the wavefunction itself. For a more advanced discussion on this issue, see Ref.~\cite{chin2020taming}. 

%\section{The second quantization laguage}

\section{From SEA to Algebraic approach}\label{algebraic}
	
Any state of identical particles has intrinsic correlations among single-particle spaces. 
	In other words, for a single particle Hilbert space $\mathcal{H}^{(1)}$,
	the anti-symmetrization (symmetrization) of fermionic (bosonic) wavefunctions sends the total Hilbert space $\mathcal{H} = \mathcal{H}^{(1)\otimes N}$  to $\mathcal{H} = \bigwedge^N \mathcal{H}^{(1)}$ ($\mathcal{H} = \bigvee^N \mathcal{H}^{(1)}$).
	Since the total Hilbert space is invariant under the action of the algebra $\mathcal{A}$ of observables, the observables  also must be invariant under the symmetrizations. Therefore, the partial trace defined as the formalism of non-identical particles is no more valid. It was the motivation of Ref.~\cite{chin2019entanglement} to introduce the symmetrized partial trace, and also the motivation of Refs. \cite{balachandran2013entanglement,balachandran2013algebraic} to suggest the concept of \emph{restrictions to subalgebras}  as the replacement of particle trace. Therefore, it is natural to ask about the relation between the symmetrized partial trace and subalgebra restriction.
	Indeed, one can show that \emph{the restriction to subalgebras is equivalent to the symmetrized partial trace for the case of identical particles.}
	
	Instead of a Hilbert space $\mH$ and linear operators acting on it, quantum systems can be described with an abstract algebra of physical observable, i.e., $C^*$ algebra, in which the algebra $\mathcal{A}$ and state $\w$ describe a given quantum system. By Gel'fand, Naimark, and Segal (GNS)~\cite{gelfand1943imbedding,segal1947irreducible} construction, the data ($\mathcal{A}, \w$) can reconstruct the corresponding Hilbert space $\mH_\w$. Here $\w$ is a linear map from the given algebra $\mathcal{A}$ to $\mathbb{C}$ that corresponds to the particle state $\r$ that maps an observable to a complex number with trace.
	The relation between Hilbert space and GNS representation of quantum physics is listed in Table 2. 		For a more thorough explanation on GNS construction, see Refs.~\cite{balachandran2013algebraic,fewster2020algebraic}.

In the GNS construction, the notion of partial trace can be replaced with the \emph{restriction} $\w_0 := \w|_{\mathcal{A}_0}$ of a state $\w$ on $\mathcal{A}$ to a subalgebra $\mathcal{A}_0$. 
	Suppose $\w$ is represented as a density matrix $\r_\w$, i.e.,
	$\w(\a) = \Tr (\r_\w\a)$ ($\a \in \mathcal{A}$).
	Then for a subalgebra $\mathcal{A}_0$ of $\mathcal{A}$, we can define a \emph{restriction of $\w$ to $\mathcal{A}_0$} as a state $\w|_{\mathcal{A}_0}$: $\mathcal{A}_0 \to \mathbb{C}$, i.e., from $\a_0$ ($\in \mathcal{A}_0$) to  $\w|_{\mathcal{A}_0}(\a_0)$ so that
	\begin{align}
	\w|_{\mathcal{A}_0}(\a_0)= \w(\a_0)
	\end{align} holds \cite{balachandran2013entanglement}. A simple example is a bipartite system of non-identical particles $A$ and $B$ in the Hilbert space $\mH =\mH_A\otimes\mH_B$. For a given vector state $|\P\> (\in  \mH)$ and the subalgebra $\mathcal{A}_0 = \{ \a_0\in\mathcal{A}| \a_0=K_A\otimes \bI_B \}$ ($K_A$ is an observable on $\mH_A$), we can see that the following equality holds:
	\begin{align}
	\w|_{\mathcal{A}_0}(\a_0) = \Tr_{\mH_A}(\r_AK_A) = \Tr_{\mH}(\r \a_0) =  \w(\a).
	\end{align} where $\r_A \equiv \Tr_{\mH_B}(\r)$. For this special case, the reduced density matrix $\r_A$ is equivalent to the restriction of $\r$ to $\mathcal{A}_0$. 

	\begin{table}[t]
	\begin{center}
		\begin{tabular}{|  l | l | l |}
			\hline
			$ $  & Hilbert space & GNS \\ \hline\hline 
			observables & $\mathcal{O} (=\mathcal{O}^\dagger)$ & $\a \in \mathcal{A}$  \\ \hline 
			state & $\r$ ($\Tr\r=1$ and $\r\ge 0$) & $\w: \mathcal{A} \overset{\w} \longrightarrow \mathbb{C}$  \\ \hline
			expectation value & $\<\mathcal{O}\>_\r =  \Tr(\r\mathcal{O}) \in \mathbb{C}$ & $\w(\a)\in\mathbb{C}$ \\ \hline  
		\end{tabular}	
	\end{center} \caption{Understading Hilbert space quantum system with GNS construction}
	\end{table}
	
	However, when particles are identical, a single particle observable is  not included in a single Hilbert space $\mH^{(1)}$ in the total Hilbert space $\mH$. Along  with the permutation symmetry of identical particles, the observables also must be invariant under the permutations of single Hilbert spaces. 
	%For example, it can be expressed as $k$-coproducts $\D^N$~\cite{balachandran2013algebraic}
	%\begin{align}
	%	\D^N(\a) =& \a \otimes  \cdots \otimes \bI + \bI\otimes \a \otimes \cdots \otimes \bI \nn \\
	%	&+ \cdots + \bI \otimes \cdots \otimes \bI\otimes \a,
	%\end{align} where $\a$ is an observable on $\mH^{(1)}$.
	As pointed out in Refs.~\cite{balachandran2013entanglement, balachandran2013algebraic}, this permutation symmetry of observables causes the discrepancy between the subalgebra restriction and the traditional form of partial trace that is valid for non-identical particles. Hence the authors of Refs.~\cite{balachandran2013entanglement, balachandran2013algebraic} claimed that (instead of the partial trace) the restriction of states to subalgebras is appropriate for analyzing the entanglement of identical particles. 
	
	On the other hand, we can see that the newly defined  symmetrized partial trace of identical particles plays the role of subalgebra restriction.
	\begin{theorem} The symmmetrized partial trace for identical particles maps  a density matrix $\r$ and a state $\w$ on the algebra $\mathcal{A}$ to the restrictions $\r_0$ and $\w_0$ on the subalgebra $\mathcal{A}_0$ defined in a subsystem $S$, i.e.,
		\begin{align}\label{sptrest}
		\w|_{\mathcal{A}_0}(\a_0) = \mathcal{TR}_{S}(\r_0 \a_0) = \mathcal{TR}(\r \a_0).
		\end{align}
		
	\end{theorem}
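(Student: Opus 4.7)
The plan is to prove the theorem in two stages. The outer equality $\omega|_{\mathcal{A}_0}(\alpha_0) = \mathcal{TR}(\rho \alpha_0)$ is immediate: by the GNS definition of restriction, $\omega|_{\mathcal{A}_0}(\alpha_0) = \omega(\alpha_0)$, and the Hilbert-space realization of $\omega$ as a density matrix $\rho$ gives $\omega(\alpha_0) = \mathcal{TR}(\rho\alpha_0)$. The substantive content is therefore the middle equality $\mathcal{TR}(\rho \alpha_0) = \mathcal{TR}_S(\rho_0 \alpha_0)$.

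First I would characterize the subalgebra $\mathcal{A}_0$ as the set of observables whose nontrivial action is confined to the subsystem $S$ while remaining invariant under pseudolabel exchanges---the identical-particle analogue of operators of the form $K_S \otimes \mathbb{I}_{\bar{S}}$ used in the non-identical example at the end of Section~\ref{algebraic}. Then, using the SEA translation of Section~\ref{sea}, I would insert the identity \eqref{IS} on $\bar{S}$ inside the full symmetrized trace, with basis vectors realized as elementary symmetric (bosonic) or exterior (fermionic) products of single-particle vectors localized in $\bar{S}$. This expresses $\mathcal{TR}(\rho \alpha_0)$ as a sum over compatible particle-number partitions consistent with the SSR together with a sum over the $\bar{S}$-basis label.

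The key step is to observe that, by the interior-product decomposition of Eqs.~\eqref{bsn=1}--\eqref{bsn=2}, the $\bar{S}$-sandwich commutes with $\alpha_0$: since $\alpha_0$ acts only on the $S$-localized factors within the graded product, it passes through the $\bar{S}$ bras and kets without producing cross terms. Collecting the $\bar{S}$-sum then yields precisely the symmetrized partial trace \eqref{partial} defining $\rho_0$, and the remaining sum over an $S$-basis is $\mathcal{TR}_S(\rho_0 \alpha_0)$.

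The main obstacle will be justifying the commutation of $\alpha_0$ with the $\bar{S}$-resolution rigorously in the graded setting, since elementary symmetric and exterior products are not honest tensor factorizations. For fermions, one must additionally track the alternating signs in Eq.~\eqref{bsn=2} and verify that the gauge phases $e^{i\theta_{a_1\cdots a_n}}$ of Eq.~\eqref{partsym} cancel pairwise between bra and ket factors after summing over the symmetrized basis. Testing the argument on the $(N,n)=(3,1)$ fermionic example of Appendix~\ref{31ferm} would make this sign and phase bookkeeping transparent before asserting the general case.
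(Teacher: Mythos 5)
Your strategy is sound and targets the same essential content as the paper --- the middle equality, with $\alpha_0$ of the form $K_S$ combined (in the graded sense) with the identity on $\bar{S}$ --- but the execution is genuinely different. The paper does not attempt the general insertion-of-identity/commutation argument you outline; instead it writes out the single case $(N,n)=(2,1)$ completely explicitly: it expands $\alpha_0=\sum_{a,b}\alpha_{ab}\,|\hat{\Psi}_a\rangle\langle\hat{\Psi}_b|\wedge\hat{\mathbb{I}}_R$ and $|\Psi\rangle=\sum_{c,\mu}c_{c\mu}\,|\Psi_c\rangle\wedge|\Psi_\mu\rangle$, computes $\rho_L=\mathcal{TR}_R|\Psi\rangle\langle\Psi|$ via the interior product, and verifies that both $\mathcal{TR}_L(\rho_L\alpha_0)$ and $\mathcal{TR}(\alpha_0|\Psi\rangle\langle\Psi|)$ equal $\sum_{a,b}\alpha_{ab}X_{ba}$ with $X_{cd}=\sum_{\omega}c_{c\omega}c^{*}_{\omega d}$; the general $(N,n)$ case is then simply asserted to be ``straightforward.'' So what you propose as a final sanity check (an explicit low-rank computation, in your case the $(3,1)$ fermionic example of Appendix~\ref{31ferm}) is in effect the entirety of the paper's proof, while the step you correctly identify as the crux --- that $\alpha_0$ passes through the $\bar{S}$-resolution of the identity of Eq.~\eqref{IS} even though the symmetric/exterior product is not an honest tensor factorization, with the alternating signs of Eq.~\eqref{bsn=2} and the gauge phases of Eq.~\eqref{partsym} cancelling --- is precisely the part the paper never writes down. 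Your route, if carried through, would yield a more general and more rigorous argument than the published one; as the proposal stands, it and the paper leave essentially the same step unproven, yours explicitly flagged and the paper's hidden in the word ``straightforward.''
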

	\begin{proof} First, we will show that Eq.~\eqref{sptrest} holds for the simplest case, i.e., $(N,n)=(2,1)$. Suppose a subalgebra $\mathcal{A}_0$ divides the wave fuctions $\P_{i}$ into two parts, i.e., $\P_a$ and $\P_{\m}$ (states with latin indices form a complete orthonormal basis set of the subsystem $L$ and those with greek indices form a complete orthonormal basis set of the subsystem $R$. $\P_a$ are rotated to each other by operators in $\mathcal{A}_0$).
		Then an observable $\a_0$ ($\in \mathcal{A}_0$) for 2-particle states in $\mathcal{A}$ is written as
		\begin{align}
		\a_0 &= \sum_{a,b,\m} \a_{ab} (|\hP_a\>\wedge|\hP_\m\>)(\<\hP_b|\wedge\<\hP_\m|)   \nn \\ 
		&\equiv \sum_{a,b} \a_{ab}|\hP_a\>\<\hP_b|\wedge \hat{\bI}_R.
		\end{align}
		For a given state $|\P\> = \sum_{c,\m} c_{c\m} |\P_c\>\wedge |\P_\m \>$, we have
		\begin{align}
		&\mathcal{TR}_{R}|\P\>\<\P|\nn \\
		& = \sum_{c,d,\m,\n,\w}c_{c\m}c^*_{\n d} \< \P_\w|\cdot |\hP_c\>\wedge |\hP_\m\>\<\hP_\n|\wedge \< \hP_d|\cdot|\hP_\w\>  \nn \\
		&=\sum_{c,d}(\sum_\w c_{c\w}c^*_{\w d}) |\hP_c\>\<\hP_d|  \equiv \sum_{c,d} X_{cd}|\hP_c\>\<\hP_d| \equiv \r_L.
		\end{align} 
		Then
		\begin{align}
		&\mathcal{TR}_L (\r_L\a_0) \nn \\
		&= \sum_{a,b,c,d}\a_{ab}X_{cd} \<\P_d|\P_a\>\<\P_b|\P_c\>= \sum_{ab} \a_{ab}X_{ba} 
		\end{align}
		On the other hand,
		\begin{align}
		\mathcal{TR}(\a_0|\P\>\<\P|) =& \sum_{\substack{a,b,\m\\c,\n,d,\w}}\a_{ab}c_{c\n}c^*_{\w d} \<\hP_\w|\wedge \< \hP_d|\cdot |\hP_a\>\wedge|\hP_\m\>\nn \\
		&\qquad\quad   \times \<\hP_\m|\wedge\<\hP_b|\cdot |\hP_c\>\wedge |\hP_\n\> \nn \\
		=& \sum_{a,b}\a_{ab}X_{ba},	
		\end{align} hence Eq.~\eqref{sptrest} holds for $(N,n)=(2,1)$ case. The extension of this method to the general $(N,n)$ case is straightforward.
	\end{proof}
	In the above proof, one can see that possible states for a given subsystem (which is determined by the implementation of detectors) determines the selection of subalgebra in GNS representation. This means that \emph{the detector dependence of entanglement is equivalent to the subalgebra dependence of entanglement mentioned in Ref.~\cite{balachandran2013algebraic}.} This point is rigorously confirmed in Ref.~\cite{chin2020taming} by proving that the total Hilbert space is factorizable according to the locality of subsystems. 
	%An $N=2$ example with $\mH^{(1)}=\mathbb{C}^4$ is given in Supplemental Materials.
	
	Since the computation of symmetrized partial trace is a straightforward process, it is more convenient to obtain the same reduced density matrix for a set of identical particles using the method of symmetrized partial trace than using the subalgebra restriction technic. Nonetheless, the subalgebra restiction is still a useful tool to understand some multipartite systems that are less symmetrical than a set of identical particles.
	
	%%%%%%%%%%%%%%%%%%%%%%%%%%%%%%%%%%%%%%%%%%%%%%%%%%%%%%%%%%%%%%%%%%%%%%%%%%%%%%%%%%%%%%%%%%%%%%%%%%%%%%%%%%%%%%%%%%%%%%%%%%%%%%%%%%%%%%%%%%%%%%%%%%%%%%%%%%%%%%%%%%%%%%%%%%%%%%%%%%%%%%

	\section{Conclusions}\label{conclusions}
	We have discussed the reduced density matrix of nonlocal identical particles in three types of formalisms, i.e., the symmetrized partial trace in the first quantization language (1QL), the interior products in symmetric and exterior algebra (SEA, which correspond to the no-labeling approach), and the subalgebra restriction in the GNS representation. Our current work bridges different viewpoints that use diverse languages to understand the quantum correlation of identical particles. We also expect that it provides a tool to understand the nonlocality of identical particle systems such as quantum field theory (e.g., Tomita-Takesaki theory \cite{tomita1967canonical,takesaki2006tomita,witten2018aps} and symmetric product orbifold CFT \cite{balasubramanian2019entanglement}).
	%==========================================================
	%==========================================================
	%\paragraph{Comparison with other approaches.---}

	%==========================================================
	\section*{Acknowledgements}
	This work is supported by Basic Science Research Program through the National Research Foundation of Korea (NRF) funded by the Ministry of Education, Science and Technology (NRF-2015R1A6A3A04059773, NRF-2019R1I1A1A01059964, NRF-2019M3E4A1080227, NRF-2019M3E4A1079666). JH acknowledges the support by the POSCO Science Fellowship of POSCO TJ Park Foundation. 
	
	%==========================================================

\appendix

\section{Identical particles in the implicit notation}\label{implicit}
We follow Ref.~\cite{chin2020taming} for the implicit description of identical particle.
With Eq.~\eqref{nnonidordered}, the wavefunction of $N$ identical particles in the implicit notation is expressed as the following definition.

%%%%%%%%%%%%%%%%%%%%%%%%%%%%%%%%%%%%%%%%%%%%%%%%%%%%%%%%%%%%%%%%%%%%

\begin{definition} An $N$ boson state is defined with  
the symmetric tensor product $\vee$ as
\begin{align}\label{bosons}
 |\P_1\>\vee\cdots \vee |\P_N\> = \frac{1}{\mN} \sum_{\s\in S_N} |\P_{\s(1)}\>\otimes \cdots \otimes |\P_{\s(N)}\>.
\end{align} An $N$ fermion state is defined with the exterior (or antisymmetric) tensor product $\wedge$ as
\begin{align}\label{fermion}
 |\P_1\>\wedge\cdots \wedge |\P_N\> = \frac{1}{\mN} \sum_{\s\in S_N} (-1)^\s |\P_{\s(1)}\>\otimes \cdots \otimes |\P_{\s(N)}\>
\end{align} where $(-1)^\s$ is the signature of $\s$. 
\end{definition}
%%%%%%%%%%%%%%%%%%%%%%%%%%%%%%%%%%%%%%%%%%%%%%%%%%%%%%%%%%%%%%%%%%%%

%A closed subspace of $\mH^{\otimes N}$ generated by $|\P_1\>\vee\cdots \vee |\P_N\> $ is denoted by $\mH^{\vee N}$
%and named the $N$-fold symmetric tensor product of $\mH$, in which $N$ bosons reside, and a closed subspace of $\mH^{\otimes N}$ generated by $|\P_1\>\wedge\cdots \wedge |\P_N\> $ is denoted by $\mH^{\wedge N}$
%and named the $N$-fold exterior tensor product of $\mH$, in which $N$ femions reside. These two subspaces compose \emph{Fock spaces}, which are algebraic constructions of single Hilbert space for unfixed number of identical particles.  The bosonic Fock space over $\mH$ is defined as $F_b(\mH) =\bigoplus_{N=0}^\infty \mH^{\vee N}$ and  the fermionic Fock space as $F_f(\mH) =\bigoplus_{N=0}^\infty\mH^{\wedge N}$ with the definition $\mH^0=\mathbb{C}$.  $\mH^0$ is the Hilbert space  for the vacuum state $|vac\>$ (the role of $|vac\>$ turns out important in the definition of PSSR entanglement for fermions). 
We use $\otimes_{\pm}$ for both $\vee$ and $\wedge$ when the tensor product can be any of them.  
The transition amplitude from $|\P_1 \>\otimes_{\pm} \cdots \otimes_{\pm} |\P_N\>$ to $|\Phi_1 \>\otimes_{\pm} \cdots \otimes_{\pm} |\Phi_N\>$ is calculated with the scalar product of two tensors, 
\begin{align}
 \<\Phi_N|\vee \cdots \vee \<\Phi_1| \cdot |\P_1 \>\vee \cdots \vee |\P_N\> = \frac{1}{\mN^2}Per[\<\Phi_i|\P_j\>], \nn \\
  \<\Phi_N|\wedge \cdots \wedge \<\Phi_1| \cdot |\P_1 \>\wedge \cdots \wedge |\P_N\> = \frac{1}{\mN^2}Det[\<\Phi_i|\P_j\>],
\end{align}
where $Per$ and $Det$ are the permanent and determinant of a $N\times N$ matrix that has $\<\Phi_i|\P_j\>$ as its elements.

Then the creation operator $\ha^\dagger(\P)$ is defined as
  \begin{align}\label{creat}
      \ha^\dagger (\P) (|\P_1\>\otimes_\pm\cdots\otimes_\pm  |\P_N\>) = |\P\>\otimes_\pm|\P_1\>\otimes_\pm\cdots \otimes_\pm  |\P_N\> 
  \end{align} and the annihilation operator is defined as
  \begin{align}\label{annih}
      \ha (\P) (|\P_1\>\otimes_\pm\cdots\otimes_\pm |\P_N\>) 
     &\equiv \<\P|\cdot |\P_1\>\otimes_\pm \cdots\otimes_\pm |\P_N\> \nn \\
      &=\sum_{i=1}^N (\pm 1)^{i-1} \<\P|\P_i\>|\P_1\>\otimes_\pm \cdots \otimes_\pm (|\P_i\>) \otimes_\pm\cdots \otimes_\pm |\P_N\>,
  \end{align}  where $(|\P_i\>)$ in the last line denotes that a particle  of pseudolabel $A_i$ in state $|\P_i\>$ is absent. It is direct to see that Eq.~\eqref{annih} corresponds to Eq.~\eqref{bsn=1}. This relation reveals that an identical particle state in the implicit notation is identical to the elementary symmetric state vectors. Indeed, the phase ambiguity in the explicit notation does not appear in the implicit notation such as in SEA formalism. In this sense, we can state that \emph{the implicit description of identical particles with the exchange symmetries is another form of SEA.}

%%%%%%%%%%%%%%%%%%%%%%%%%%%%%%%%%%%%%%%%%%%%%%%%%%%%%%%%%%%%%%%%%%%%%%%%%%%%%%%%%%%%%%%%%%%%%%%%%%%%%%%%%%%%%%%%%%%%%%%%%%%%%%%%%%%%%%%%%%%%%%%%%%%%%%%%%%%%%%%%%%%%%%%%%%%%%%%%%%%%%%%%%%%%%%%%%%%%%%%%%%%%%%%%%%

\section{$N=3$ fermion example}\label{31ferm}

Here we compute $(N,n)=(3,1)$ fermion example, i.e., one fermion at the left detector and two in the right (Fig.~\ref{split}). Let the fermions have four internal degrees of freedom. Then the possible states are expressed as 
 	\begin{align}
	(|\P_1\>,|\P_2\>,|\P_3\>,|\P_4\>) = (|L,0\>,|L,1\>,|L,2\>,|L,3\> ), \nn \\
	(|\P_5\>,|\P_6\>,|\P_7\>,|\P_8\>) = (|R,0\>,|R,1\>,|R,2\>,|R,3\> )
	\end{align} 
	%where $L $ and $R$ denote the spatial location of two detectors that place far enough from each other, i.e., $\<L|R\>=0$ and $(0,1,2,3)$ denote the internal states (pseudospins).
	and the identity operator of the subsystem $R$ is given by
	\begin{align}
	\bI_R&=  |R,0\>\<R,0| + |R,1\>\<R,1|+ |R,2\>\<R,2| + |R,3\>\<R,3| \nn \\
	&=  |\P_5\>\<\P_5| + |\P_6\>\<\P_6|+ |\P_7\>\<\P_7| +|\P_8\>\<\P_8| 
	\end{align}
	where 
	\begin{align}\label{ferm31}
	     	|\P_i\>^f= \frac{1}{\sqrt{3}}(|\P_i\>_{A}+ e^{i\th_i}|\P_i\>_{B}+e^{i\eta_i}|\P_i\>_{C})
	\end{align}
 for $i=5,6,7,8$ (Eq.~\eqref{partsym}). Suppose the $N=3$ fermion state is as follows: 
	\begin{align}
	|\P\>^- &= \frac{1}{\sqrt{2}}(|\P_1,\P_2,\P_5\>^- + |\P_1,\P_3,\P_6\>^-).
	\end{align}
	We can compute the reduced density matrix $\r_L$ for the subsystem $L$ by using the symmetrized partial trace technic as
		\begin{align}
		\r_L= \mathcal{TR}_R|\P\>\<\P|^- =&\Big( \<\P_5|\P_1,\P_2,\P_5\>^-\<\P_5,\P_2,\P_1|\P_5\>^- +  \<\P_6|\P_1,\P_3,\P_6\>^-\<\P_6,\P_3,\P_1|\P_6\>^-\Big). 
		\end{align}
		Since 
		\begin{align}\label{ferm32}
		&\<\P_5|\P_1,\P_2,\P_5\>^- \nn \\
 		&=\Big[ (|\P_1\>_B|\P_2\>_C - |\P_2\>_B|\P_1\>_C ) +e^{-i(\th_4+\pi)}(|\P_1\>_A|\P_2\>_C - |\P_2\>_A|\P_1\>_C) +e^{-i\eta_4}( |\P_1\>_A|\P_2\>_B - |\P_2\>_A|\P_1\>_B) \Big] \nn \\
		&\equiv |\P_1,\P_2\>^-
		\end{align} (note that the second line of the above equation fits Eq.~\eqref{partsym} with $n=2$)
	and so on,  we have 
	\begin{align}
	\r_L = \frac{1}{2}(|\P_1,\P_2\>\<\P_1,\P_2|^- + |\P_1,\P_3\>\<\P_1,\P_3|^-).
	\end{align} This state is entangled with the von Neumann entropy $S(|\P\>)=1$. The phase change from Eq.~\eqref{ferm31} to the second line of Eq.~\eqref{ferm32} shows that the relative phase  cannot be fixed for the fermionic case. 
	
	The same computation is possible in the exterior vector formalism by expressing 
	\begin{align}
	|\P\>^- &= \frac{1}{\sqrt{2}}(|\hat{\P}_1\>\wedge|\hat{\P}_2\>\wedge |\hat{\P}_5\> + |\hat{\P}_1\>\wedge|\hat{\P}_3\>\wedge|\hat{\P}_6\>).
	\end{align} 
	Then using Eq.~\eqref{bsn=1}, the reduced density matrix is given by
	\begin{align}
	\r_L = \frac{1}{2}\big(|\hat{\P}_1\>\wedge|\hat{\P}_2\>\<\hat{\P}_2|\wedge\<\hat{\P}_1| + |\hat{\P}_1\>\wedge|\hat{\P}_3\>\<\hat{\P}_3|\wedge\<\hat{\P}_1|\big).
	\end{align}

\bibliographystyle{unsrt}
\bibliography{extgns}

\end{document}